\journalname{China Science Bulletein }
\newcommand{\mi}{\mathrm{i}}
\newcommand{\dif}{\mathrm{d}}
\DeclareMathAlphabet{\mathsfsl}{OT1}{cmss}{m}{sl}
\newcommand{\tensor}[1]{\mathsfsl{#1}}
\begin{document}

 \title{A Sort of Relation Between a Dissipative Mechanical System and Conservative Ones }
\titlerunning{Relation Between a Dissipative Mechanical System and Conservative Ones}

\author{Tianshu Luo \and Yimu Guo}

\institute{Tianshu Luo, Yimu Guo \at
              Institute of Solid Mechanics, Department of Applied Mechanics, Zhejiang University,\\
Hangzhou, Zhejiang, 310027,  P.R.China          
            \email{ltsmechanic@zju.edu.cn}
           \and
           Yimu Guo \at
              Institute of Solid Mechanics, Department of Applied Mechanics, Zhejiang University,\\
Hangzhou, Zhejiang, 310027,  P.R.China          
            \email{guoyimu@zju.edu.cn}
}

\date{Received: date / Accepted: date}
\maketitle
\begin{abstract}
In this paper we proposed a proposition: for any nonconservative classical mechanical system and any initial condition, there exists a conservative one; 
the two systems share one and only one common phase curve;  the Hamiltonian of the conservative system is the sum of the total energy of the 
nonconservative system on the aforementioned phase curve and a constant depending on the initial condition. Hence, this approach entails 
substituting an infinite number of conservative systems for a dissipative mechanical system corresponding to varied initial conditions. 
One key way we use to demonstrate these viewpoints is that by the Newton-Laplace principle the nonconservative force can be reasonably assumed to be 
equal to a function of a component of generalized coordinates $q_i$ along a phase curve, such that a nonconservative mechanical system can be reformulated 
as countless conservative systems. Utilizing the proposition, one can apply the method of Hamiltonian mechanics or Lagrangian mechanics to dissipative mechanical
 system. The advantage of this approach is that there is no need to change the definition of canonical momentum and 
the motion is identical to that of the original system. 
\end{abstract}

\keywords{Hamiltonian, dissipation, non-conservative system, damping, symplectic algorithm}

\section{Introduction\label{Introduction}}
In general, Hamiltonian mechanics and Lagrangian mechanics are applied to conservative classical mechanical system or conservative quantum-mechanical system. 
In this paper we attempt to find a sort of relationship between a dissipative classical mechanical system between nonconservative classical mechanical ones, 
then we might apply some methods derived from symplectic geometry to dissipative classical mechanical system.

Some researchers attempt to represent a dissipative system as Hamiltonian formalism or Lagrangian formalism. For instances, about half a century ago, 
Calirola\cite{PCalirola1941},Kana\cite{1948PThPh...3..440K} adopted the Hamiltonian
\begin{equation}
 H_{ck}(q,p)=\frac{1}{2}\left(e^{-2\eta t}p^2+e^{2\eta t}\omega^2 q^2\right),
\label{eq:Hck}
\end{equation}
which leads exactly to the classical equation of motion of a damped harmonic oscillator,
\begin{equation}
 \ddot{x}+2\eta \dot{x}+\omega^2 x^2=0, \ \ \eta>0
\label{eq:1d_d_oscl}
\end{equation}
In this Hamiltonian-description, the canonical momentum is defined as
\[
 p_{ck}=e^{2\eta t}p
\]

In 1940s Morse and Feshbach\cite{book3} gave an example of an artificial Hamiltonian for a damped oscillator based on 
a ``mirror-image'' trick, incorporating a second oscillator with negative friction. The resulting Hamiltonian is 
unphysical: it is unbounded from below and under time reversal the oscillator is transformed into its ``mirror-image''. 
By this arbitrary trick dissipative systems can be handled as though they were conservative. Bateman\cite{PhysRev.38.815} proposed a similar 
approach.
For the system (\ref{eq:1d_d_oscl}), we have
\begin{eqnarray}
 \ddot{x}+2\eta \dot{x}+\omega^2 x^2=0 \ \ (original) \\
 \ddot{y}-2\eta \dot{x}+\omega^2 x^2=0 \ \ (mirror-image).
\label{eq:Bateman}
\end{eqnarray}
Correspondingly, there is Bateman(-Morse-Feshbach) Lagrangian:
\begin{equation}
 L_B(x,\dot{x},y,\dot{y})=\dot{x}\dot{y}+\eta(x\dot{y}-\dot{x}y)-m\omega^2xy
\label{eq:L_batm}
\end{equation}

Rajeev\cite{art7} considered that a large class of dissipative systems can be brought to a canonical form by introducing complex
 coordinates in phase space and a complex-valued Hamiltonian. Rajeev\cite{art7} indicated that Eq.(\ref{eq:1d_d_oscl}) can be brought to diagonal
 form by a linear transformation:
\begin{equation}
 z=A\left[-\mi(p+\eta x)+\omega_1 x\right], \frac{\dif z}{\dif t}=\left[-\gamma+\mi \omega_1\right]z,
\label{eq:complex_transform}
\end{equation}
where 
\begin{equation}
 \omega_1=\sqrt{\omega_2-\gamma^2},
\end{equation}
and the constant $A=1/\sqrt{2\omega_1}$. Then 
\cite{art7} defined the complex-valued function as Hamiltonian
\begin{equation}
 \mathcal{H}=(\omega_1+\mi \eta)zz^*,
\end{equation}
which satisfied
\[
 \frac{\dif z}{\dif t}=\left\lbrace \mathcal{H},z \right\rbrace, \ \ \frac{\dif z^*}{\dif t}=\left\lbrace \mathcal{H},z^* \right\rbrace
\]

By reviewing the works of \cite{PCalirola1941}\cite{1948PThPh...3..440K}\cite{PhysRev.38.815}\cite{book3}\cite{art7}, we can find that they attempt 
to transform a dissipative system into a conservative system entirely and these approaches 
might be suitable for Hamiltonian representation of one-dimensional damped oscillators (weak non-Lagrangian systems) and quantization. 
Because by observing Eq.(\ref{eq:Hck}), Eq.(\ref{eq:Bateman}), Eq.(\ref{eq:L_batm}) and the transformation 
(\ref{eq:complex_transform}), one can find that the damping coefficient is independent of other particles, and \cite{art7} had wrote: '
These complex coordinates are the natural variable(normal modes) of the system. '

In area of quantum mechanics, \cite{PhysRevA.81.022112,Kochan2010219} attempts to quantize dissipative forces in terms of the two form $\Omega$ (
 an analog of $\dif p \wedge \dif q-\dif H \wedge \dif t$), avoiding to obtain Hamiltonian or Lagrangian formulation of non-Lagrangian system.

Marsden \cite{Marsden2007} and other researchers applied the equations as below to the problem of stability of dissipative systems
\begin{eqnarray}
\dot{p}_i&=&- \frac{\partial H}{\partial q_i}+\bm{F}\left( \frac{\partial{r}}{\partial q_i} \right) \nonumber \\
\dot{q}_i&=& \frac{\partial H}{\partial p_i},
\label{eq:eq2}
\end{eqnarray}
where the position vector $r$ depends on the canonical variable $\lbrace q,p \rbrace$, i.e. $r(q,p)$, $H$ denotes Hamiltonian, 
and $\bm{F} (\partial{r}/ \partial{q_i})$ denotes a generalized force in the direction $i$, $i=1,\dots,n$. 
Marsden considered that Eqs.(\ref{eq:eq2}) was composed of a conservative part and a non-conservative part. Eq.(\ref{eq:eq2}) apparently is 
a representation of dissipative mechanical systems in the phase space.
Although one can utilize the approaches discussed in some papers\cite{PCalirola1941}\cite{1948PThPh...3..440K}\cite{PhysRev.38.815}\cite{book3}\cite{art7} to 
convert Eq.({\ref{eq:eq2}}) into a conservative system, one must first change 
the definition of the canonical momentum of the system. If one uses numerical algorithms to solve the Hamiltonian system, the numerical
 solution will lose the physical characteristics of the original system, because the phase flow of the original system is different
 from that of the new system. We need a Hamiltonian system that shares common phase flow or solution with the original system. 
But this demand cannot be satisfied, because it conflicts with Louisville's theorem. Therefore, we would have to attempt to find other relationship between 
dissipative systems and conservative ones.

Based on Eq.(\ref{eq:eq2}), in this paper we will attempt to demonstrate that a dissipative mechanical system shares a single common phase curve with a conservative system. 
In the light of this property, we will propose an approach to substitute a group of conservative systems for a dissipative mechanical system. 
In the following section, we will illustrate the relationship between a dissipative mechanical system and a conservative one.

\section{Relationship between a Dissipative Mechanical System and a Conservative One}

\subsection{A Proposition\label{ObHamiltonian}}
Under general circumstances, the force $\bm{F}$ is a damping force that depends on the variable set $q_1,\cdots,q_n,\dot{q}_1,\cdots,\dot{q}_n$.
$F_i$ denotes the components of the generalized force $\bm{F}$.
\begin{equation}
F_i(q_1,\cdots,q_n,\dot{q}_1,\cdots,\dot{q}_n)=\bm{F}\left( \frac{\partial{r}}{\partial q_i}\right).
\label{eq:inth-1}
\end{equation}
Thus we can reformulate Eq.(\ref{eq:eq2}) as follows:
\begin{eqnarray}
\dot{p}_i&=& - \frac{\partial H}{\partial q_i}
+F_i(q_1,\cdots,q_n,\dot{q}_1,\cdots,\dot{q}_n) \nonumber \\
 \dot{q}_i&=& \frac{\partial H}{\partial p_i}.
\label{eq:inth-2}
\end{eqnarray}
Suppose the Hamiltonian quantity of a conservative system without damping is $\hat{H}$. Thus we
 may write a Hamilton's equation of the conservative system :
\begin{eqnarray}
\dot{p}_i &=& -\frac{\partial {\hat{H}}}{\partial q_i} \nonumber \\
\dot{q}_i &=&\frac{\partial \hat{H}}{\partial p_i}.
\label{eq:inth-3}
\end{eqnarray}
We do not intend to change the definition of momentum in classical mechanics, but we do require that a special solution  
of Eq.(\ref{eq:inth-3}) is the same as that of Eq.(\ref{eq:inth-2}).
We may therefore assume a phase curve $\gamma$ of Eq.(\ref{eq:inth-2}) coincides with 
that of Eq.(\ref{eq:inth-3}). The phase curve $\gamma$ corresponds to an initial condition $q_{i0},p_{i0}$. 
Consequently by comparing Eq.(\ref{eq:inth-2}) and Eq.(\ref{eq:inth-3}), we have
\begin{eqnarray}
\left.\frac{\partial{\hat{H}}}{\partial{q_i}}\right|_{\gamma} &=&
\left.\frac{\partial H}{\partial q_i}\right|_\gamma-\left. F_i(q_1,\cdots,q_n,\dot{q}_1,\cdots,\dot{q}_n)\right|_\gamma \nonumber \\
\left.\frac{\partial{\hat{H}}}{\partial{p_i}}\right|_\gamma&=&
\left.\frac{\partial H}{\partial p_i}\right|_\gamma,
\label{eq:inth-4}
\end{eqnarray}
where $\left.\frac{\partial{\hat{H}}}{\partial{q_i}}\right|_{\gamma},\left.\frac{\partial H}{\partial q_i}\right|_\gamma
,\left.\frac{\partial{\hat{H}}}{\partial{p_i}}\right|_\gamma and \left.\frac{\partial H}{\partial p_i}\right|_\gamma$ denote the values
of these partial derivatives on the phase curve $\gamma$ and $\left.F_i(q_1,\cdots,q_n,\dot{q}_1,\cdots,\dot{q}_n)\right|_\gamma$ denotes 
the value of the force $F_i$ on the phase curve $\gamma$. In classical mechanics the Hamiltonian $H$ of a conservative mechanical 
system is mechanical energy and can be written as:
\begin{equation}
 H=\int_{\gamma}\left(\frac{\partial{H}}{\partial{q_i}}\right)\dif q_i
+\int_{\gamma} \left(\frac{\partial H}{\partial p_i}\right)\dif p_i+const_1,
\label{eq:inth-5}
\end{equation}
where $const_1$ is a constant that depends on the initial condition described above. If $q_i=0,p_i=0$, then $const_1=0$.
The mechanical energy $H$ of the system (\ref{eq:inth-2}) can be evaluated via Eq. (\ref{eq:inth-5}) too.
The Einstein summation convention has been used this section. Thus an attempt has been made to find $\left.\hat{H} \right|_\gamma$ 
through line integral along the phase curve $\gamma$ of the dissipative system
\begin{eqnarray}
\int_{\gamma}\frac{\partial{\hat{H}}}{\partial{q_i}}\dif q_i
&=&\int_{\gamma}\left[\frac{\partial H}{\partial q_i}
-F_i(q_1,\cdots,q_n,\dot{q}_1,\cdots,\dot{q}_n)\right]\ \dif q_i \nonumber \\
 \int_{\gamma} \frac{\partial \hat{H}}{\partial p_i}\dif p_i
&=&\int_{\gamma} \frac{\partial H}{\partial p_i}\dif p_i.
\label{eq:inth-6}
\end{eqnarray}
Analogous to Eq.(\ref{eq:inth-5}), we have
\begin{equation}
 \left.\hat{H}\right|_{\gamma}=\int_{\gamma}\frac{\partial{\hat{H}}}{\partial{q_i}}\dif q_i
+\int_{\gamma} \frac{\partial{\hat{H}}}{\partial p_i}\dif p_i+const_2,
\label{eq:inth-7}
\end{equation}
where $const_2$ is a constant which depends on the initial condition.
Substituting Eq.(\ref{eq:inth-5})(\ref{eq:inth-6}) into Eq.(\ref{eq:inth-7}), we have
\begin{equation}
 \left.\hat{H}\right|_\gamma=H-\int_{\gamma}F_i(q_1,\cdots,q_n,\dot{q}_1,\cdots,\dot{q}_n)\dif q_i+const.
\label{eq:inth-8}
\end{equation}
where $const=const_2-const_1$, and $H=\left.H\right|_{\gamma}$ because $H$ is mechanical energy of the nonconservative system(\ref{eq:inth-2}). 
According to the physical meaning of Hamiltonian, $const_1$, $const_2$ and $const$ are added into Eq.(\ref{eq:inth-5})(\ref{eq:inth-7})(\ref{eq:inth-8})
respectively such that the integral constant vanishes in the Hamiltonian quantity. 
Arnold\cite{Arnold1997} had presented the Newton-Laplace principle of determinacy as, 
'This principle asserts that the state of a mechanical system at any fixed moment of time uniquely
determines all of its (future and past) motion.' In other words, in the phase space the position variable and the velocity variable are 
determined only by the time $t$. Therefore, we can assume that we have already a solution of Eq.(\ref{eq:inth-2})
\begin{eqnarray}
 q_i&=&q_i(t) \nonumber \\
 \dot{q_i}&=&\dot{q_i}(t),
\label{eq:curve}
\end{eqnarray}
where the solution satisfies the initial condition. We can divide the whole time domain into a group of sufficiently small domains and in these domains $q_i$ is monotone, and hence 
we can assume an inverse function $t=t(q_i)$. If $t=t(q_i)$ is substituted into the nonconservative force $\left.F_i\right|_{\gamma}$, we can assume that:
\begin{equation}
\left.F_i(q_1(t(q_i)),\cdots,q_n(t(q_i)),\dot{q}_1(t(q_i)),\cdots,\dot{q}_n(t(q_i)))\right|_{\gamma}= \mathcal{F}_i(q_i),
\label{eq:asumption}
\end{equation}
where $\mathcal{F}_i$ is a function of $q_i$ alone. In Eq.(\ref{eq:asumption}) the function $F_i$ is restricted on the curve $\gamma$, such that a new function
 $\mathcal{F}_i(q_i)$ yields. Thus we have
\begin{eqnarray}
 \int_{\gamma}F_i \dif q_i&=&\int_{q_{i0}}^{q_i}\mathcal{F}_i(q_i) \dif q_i
=W_i(q_i)-W_i(q_{i0}).
\label{eq:inth-9}
\end{eqnarray}
According to Eq.(\ref{eq:inth-9}) the function $\mathcal{F}_i$ is path independent, and therefore $\mathcal{F}_i$ can be regarded as a conservative force. 
For that Eq.(\ref{eq:asumption}) represents an identity map from the nonconservative force $F_i$ on the curve $\gamma$ 
to the conservative force $\mathcal{F}_i$ which is distinct from $F_i$. It must be noted, that Eq.(\ref{eq:asumption}) is tenable only on the phase curve $\gamma$.
 Consequently the function form of $\mathcal{F}_i$ depends on the aforementioned initial condition; from other initial conditions $\mathcal{F}_i$
with different function forms will yield.

According to the physical meaning of Hamiltonian, $const$ is added to Eq.(\ref{eq:inth-8}) such
that the integral constant vanishes in Hamiltonian quantity. Hence $const=-W_i(q_{i0})$.
Substituting Eq.(\ref{eq:inth-9}) and $const=-W_i(q_{i0})$ into Eq.(\ref{eq:inth-8}), we have
\begin{equation}
\left.\hat{H}\right|_{\gamma}=H-W_i(q_i)
\label{eq:inth-10}
\end{equation}
where $-W_i(q_i)$ denotes the potential of the conservative force $\mathcal{F}_i$ and $W_i(q_i)$ is equal to the sum of the work done by the 
nonconservative force $F$ and $const$. In Eq.(\ref{eq:inth-10}) $\hat{H}$ and $H$ are both functions of $q_i,p_i$ and $W_i(q_i)$ 
a function of $q_i$. 
Eq.(\ref{eq:inth-10}) and Eq.(\ref{eq:inth-8}) can be thought of as a map from the total energy of the dissipative system(\ref{eq:inth-2}) to the Hamiltonian
 of the conservative system(\ref{eq:inth-3}). Indeed, $\left.\hat{H}\right|_{\gamma}$ and the total energy differ in the constant $const=-W_i(q_{i0})$. When the
conservative system takes a different initial condition, if one does not change the function form of $\left.\hat{H}\right|_{\gamma}$, one can 
consider  $\left.\hat{H}\right|_{\gamma}$ as a Hamiltonian quantity $\hat{H}$, 
\begin{equation}
 \hat{H}=\left.\hat{H}\right|_{\gamma}=H-W_i(q_i)
\label{eq:inth-10.1}
\end{equation}
and the conservative system(\ref{eq:inth-3}) can be thought of as an entirely 
new conservative system.

Based on the above, the following proposition is made:
\begin{proposition}
For any nonconservative classical mechanical system and any initial condition, there exists a conservative one; the two systems share
one and only one common phase curve;  the value of the Hamiltonian of the conservative system is equal to the sum of the total energy of the nonconservative system on the aforementioned phase curve
and a constant depending on the initial condition.
\label{pro:1}
\end{proposition}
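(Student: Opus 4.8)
The plan is to construct the conservative system by reverse-engineering its Hamiltonian from the single requirement that it admit the \emph{same} phase curve as the dissipative one. Fix an initial condition $(q_{i0},p_{i0})$ and let $\gamma$ be the corresponding solution of the dissipative system (\ref{eq:inth-2}). First I would write Hamilton's equations (\ref{eq:inth-3}) for an as-yet-unknown Hamiltonian $\hat{H}$ and impose that its solution through $(q_{i0},p_{i0})$ coincide with $\gamma$. Subtracting (\ref{eq:inth-2}) and (\ref{eq:inth-3}) along $\gamma$ yields the pointwise identities (\ref{eq:inth-4}): $\partial\hat{H}/\partial p_i$ agrees with $\partial H/\partial p_i$ on $\gamma$, while $\partial\hat{H}/\partial q_i$ differs from $\partial H/\partial q_i$ by exactly the restriction of the damping component $F_i$ to $\gamma$. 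This reduces the problem to exhibiting a function $\hat{H}$ whose differential, restricted to $\gamma$, has the prescribed components.

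Next I would integrate these identities along $\gamma$. Using the line-integral representation (\ref{eq:inth-5}) of the mechanical energy $H$ together with the analogous expression (\ref{eq:inth-7}) for $\hat{H}|_\gamma$, substitution gives $\hat{H}|_\gamma = H - \int_\gamma F_i\,\dif q_i + const$, which is (\ref{eq:inth-8}). The pivotal step is to make $\int_\gamma F_i\,\dif q_i$ depend only on the endpoint. Here I invoke the Newton--Laplace determinacy principle: along the known trajectory $q_i=q_i(t)$, $\dot q_i=\dot q_i(t)$, on each time subinterval where $q_i(t)$ is monotone there is an inverse $t=t(q_i)$, and feeding it back into $F_i$ collapses the damping force on $\gamma$ into a function $\mathcal{F}_i(q_i)$ of $q_i$ alone, namely (\ref{eq:asumption}). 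A function of $q_i$ alone has a path-independent primitive $W_i$, so $\int_\gamma F_i\,\dif q_i = W_i(q_i)-W_i(q_{i0})$ as in (\ref{eq:inth-9}); that is, $\mathcal{F}_i$ acts as a conservative (potential) force on $\gamma$. Choosing the free constant as $const=-W_i(q_{i0})$ cancels the endpoint term and gives $\hat{H}|_\gamma = H - W_i(q_i)$, which is (\ref{eq:inth-10}); declaring this expression to be the Hamiltonian $\hat{H}$ everywhere, (\ref{eq:inth-10.1}), produces a genuine conservative system whose Hamiltonian equals the total energy of the dissipative system on $\gamma$ up to the constant $-W_i(q_{i0})$. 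Letting the initial condition range over phase space then yields the promised infinite family of conservative systems, one per phase curve.

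For the clause that the two systems share \emph{one and only one} common phase curve, I would argue that the construction of $\hat{H}$ depends on $\gamma$ through $W_i$: a second common phase curve $\gamma'$ would have to satisfy (\ref{eq:inth-4}) with the \emph{same} $\hat{H}$, but the restriction of $F_i$ to $\gamma'$ generically yields a different primitive, so $\partial\hat{H}/\partial q_i$ cannot equal $\partial H/\partial q_i - F_i|_{\gamma'}$ unless $\gamma'=\gamma$. Equivalently, two distinct solutions of the conservative flow lying on different level sets of $\hat{H}$ cannot both solve the dissipative system, since the dissipative flow is not volume preserving (Liouville), as already noted in the introduction.

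I expect the main obstacle to be the well-definedness of $\mathcal{F}_i(q_i)$ in (\ref{eq:asumption}): the substitution $t\mapsto q_i$ is valid only piecewise, on intervals where $q_i(t)$ is monotone, so one must verify that the pieces of $W_i$ glue into a single-valued potential along $\gamma$ and that turning points of $q_i(t)$ do not obstruct this. A secondary delicate point is justifying that the integrals in (\ref{eq:inth-5})--(\ref{eq:inth-7}) are genuinely taken \emph{along} $\gamma$, so that $H$ and $\hat{H}$ may legitimately be identified with their restrictions to $\gamma$; this is precisely what ties the otherwise formal manipulation to the actual dynamics and must be handled with care.
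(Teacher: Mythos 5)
Your proposal follows essentially the same route as the paper: you construct $\hat{H}$ by restricting the damping force $F_i$ to the fixed phase curve $\gamma$ via the Newton--Laplace inversion $t=t(q_i)$, obtain the path-independent potential $W_i$, fix the constant as $-W_i(q_{i0})$ to arrive at Eq.~(\ref{eq:inth-10.1}), and then invoke Liouville's volume-preservation theorem to rule out a second common phase curve, which is precisely the paper's two-part argument. The delicate points you flag (single-valuedness of $\mathcal{F}_i(q_i)$ across turning points of $q_i(t)$, and the identification of $H$ and $\hat{H}$ with their restrictions to $\gamma$) are likewise present, and likewise left unresolved, in the paper's own proof.
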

\begin{proof} 
First we must prove the first part of the Proposition \ref{pro:1}, i.e. that a conservative system with Hamiltonian presented by Eq.(\ref{eq:inth-10.1})
shares a common phase curve with the nonconservative system represented by Eq.(\ref{eq:inth-2}). In other words 
the Hamiltonian quantity presented by Eq.(\ref{eq:inth-10.1}) satisfies Eq.(\ref{eq:inth-4}) under the same initial condition. 
Substituting Eq.(\ref{eq:inth-10.1}) into the left side of Eq.(\ref{eq:inth-4}), we have
\begin{eqnarray}
\frac{\partial{\hat{H}(q_i,p_i)}}{\partial {q_i}}&=&\frac{\partial H(q_i,p_i)}{\partial {q_i}}
-\frac{\partial{W_j(q_j)}}{\partial {q_i}} \nonumber\\
\frac{\partial{\hat{H}(q_i,p_i)}}{\partial {p_i}}&=&\frac{\partial H(q_i,p_i)}{\partial {p_i}}
-\frac{\partial{W_j(q_j)}}{\partial {p_i}}.
\label{eq:inth-11}
\end{eqnarray}
It must be noted that although $q_i$ and $p_i$ are considered as distinct variables in Hamilton's mechanics, we can consider $q_i$ and
$\dot{q_i}$ as dependent variables in the process of constructing of $\hat{H}$.
At the trajectory $\gamma$ we have
\begin{eqnarray}
 \frac{\partial{{W_j(q_j) }}}{\partial {q_i}}&=&
\frac{\partial{(\int_{q_{j0}}^{q_j}\mathcal{F}_j(q_j) \dif q_j+W_i(q_{i0}))}}{\partial {q_i}}
=\mathcal{F}_i(q_i) \nonumber \\
\frac{\partial{{W_j(q_j) }}}{\partial {p_i}}&=0,
\label{eq:inth-12}
\end{eqnarray}
where $\mathcal{F}_i(q_i)$ is equal to the damping force $F_i$ on the phase curve $\gamma$. Hence under the initial condition $q_0, p_0$, 
Eq.(\ref{eq:inth-4}) is satisfied. As a result, we can state that the phase curve  of Eq.(\ref{eq:inth-3}) coincides with that of 
Eq.(\ref{eq:inth-2}) under the initial condition;  and $\hat{H}$ represented by Eq.(\ref{eq:inth-10.1}) is 
the Hamiltonian of the conservative system represented by Eq.(\ref{eq:inth-3}).

Then we must prove the second part of Proposition \ref{pro:1}: the uniqueness of the common phase curve.

 We assume that eq.(\ref{eq:inth-3}) shares two common phase curves, $\gamma_1$ and $\gamma_2$, with eq.(\ref{eq:inth-2}). 
Let a point of $\gamma_1$ at the time $t$ be $z_1$, a point of $\gamma_2$ at the time $t$  $z_2$, and 
$g^t$ the Hamiltonian phase flow of eq.(\ref{eq:inth-3}). Suppose a domain $\Omega$ at $t$ which contains only points $z_1$ and $z_2$, and $\Omega$ is 
not only a subset of the phase space of the nonconservative system(\ref{eq:inth-2}) but also that of the phase space of the conservative system(\ref{eq:inth-3}). 
Hence there exists a phase flow $\hat{g}^t$ composed of $\gamma_1$ and $\gamma_2$, and $\hat{g}^t$ is the phase flow of eq.(\ref{eq:inth-2}) restricted by $\Omega$.
According to the following Louisville's theorem\cite{Arnold1978}:
\begin{theorem}
The phase flow of Hamilton's equations preserves volume: for any region $D$ we have
\[
 volume\ of\ g^tD=volume\ of\ D
\]
where $g^t$ is the one-parameter group of transformations of phase space
\[
 g^t:(p(0),q(0))\longmapsto:(p(t),q(t))
\]
\label{Liouville}
\end{theorem}
$g^t$ preserves the volume of $\Omega$. This implies that the phase flow of eq.(\ref{eq:inth-2}) $\hat{g}^t$ preserves the volume of $\Omega$ too. 
 But the system (\ref{eq:inth-2}) is not conservative, which conflicts
with Louisville's theorem; hence only a phase curve of eq.(\ref{eq:inth-3}) coincides with that of eq.(\ref{eq:inth-2}).

\smartqed \qed
\end{proof}
In the next section three examples is given to demonstrate Proposition \ref{pro:1}.
\section{Examples} 
In this section, first two simple analytical examples are given, then a pro forma example is given.
\subsection{One-dimensional Analytical Example}
Consider a special one-dimensional simple mechanical system:
\begin{equation}
 \ddot{x}+c\dot{x}=0,
\label{eq:simp_1d}
\end{equation}
where $c$ is a constant. The exact solution of the equation above is
\begin{equation}
 x=A_1+A_2e^{-ct},
\label{eq:sol_1d}
\end{equation}
where $A_1,A_2$ are constants. From the equation above, we derived the velocity:
\begin{equation}
 \dot{x}=-cA_2e^{-ct}.
\label{eq:sol_1dv}
\end{equation}
From the initial condition $x_0,\dot{x}_0$, we find $A_1=x_0+\dot{x}_0/c, A_2=-\dot{x}_0/c$. From Eq.(\ref{eq:sol_1d})
\begin{equation}
 t=-\frac{1}{c}\ln\frac{x-A_1}{A_2}
\label{eq:tfunc}
\end{equation}
Substituting the equation above into Eq.(\ref{eq:sol_1dv}), such we have
\begin{equation}
 \dot{x}=-c(x-A_1)=-c(x-A_1)
\label{eq:dx}
\end{equation}
The dissipative force $F$ in the dissipative system (\ref{eq:simp_1d}) is
\begin{equation}
 F=c\dot{x}.
\label{eq:F}
\end{equation}
Substituting Eq.(\ref{eq:dx}) into Eq.(\ref{eq:F}), such we have the conservative force $\mathcal{F}$
\begin{equation}
 \mathcal{F}=-c^2(x-A_1);
\label{eq:mF}
\end{equation}
Clearly the conservative force $\mathcal{F}$ depends on the initial condition of the dissipative system (\ref{eq:simp_1d}), in other words 
an initial condition determine a conservative force. Consequently a new conservative system yields
\begin{equation}
 \ddot{x}+\mathcal{F}=0\rightarrow \ddot{x}-c^2(x-A_1)=0.
\label{eq:1d_eq_consys}
\end{equation}
The stiffness coefficient of the equation above must be negative. One can readily verify that the particular solution (\ref{eq:sol_1d}) 
of the dissipative system can satisfy the conservative one (\ref{eq:1d_eq_consys}). This point agrees with Proposition (\ref{pro:1}).

The potential of the conservative system（\ref{eq:1d_eq_consys}）is 
\[
 V=\int_0^x \left[ -c^2(x-A_1) \right] =-\frac{c^2}{2}x^2+c^2A_1x 
\]
If $t\rightarrow\infty$，$x\rightarrow A_1$ and $\dot{x}\rightarrow 0$. This implies that the kinetic energy of the corresponding conservative system 
would tend to $0$ and the potential a constant $C^2A_1^2/2$ which is equal to the energy loss of the original system.
Both the mechanical energy of the conservative system (\ref{eq:1d_eq_consys}) at initial instance and $t\rightarrow\infty$ are 
$c^2A_1^2/2$.

\subsection{Two-dimensional Analytical Example}
Let us consider a special two-dimensional mechanical system
\begin{eqnarray}
\ddot{x}+\dot{x}-\dot{y}&=&0\nonumber\\
\ddot{y}-\dot{x}+\dot{y}&=&0.
\label{eq:simple_2d}
\end{eqnarray}
The exact solution of the equation above with initial initial condition $x_0,y_0,\dot{x}_0,\dot{y}_0$ is
\begin{eqnarray}
x(t) &=&-\frac{\dot{y}_0-\dot{x}_0-4x_0}{4}+\frac{{e}^{-2t}( \dot{y}_0-\dot{x}_0) }{4}+\frac{t( \dot{y}_0) }{2}+\frac{t(\dot{x}_0)}{2}\nonumber\\
y(t) &=&\frac{\dot{y}_0-\dot{x}_0+4y_0}{4}-\frac{{e}^{-2t}( \dot{y}_0-\dot{x}_0) }{4}+\frac{t( \dot{y}_0) }{2}+\frac{t( \dot{x}_0)}{2}
\label{eq:sol_2d}
\end{eqnarray}
For convenience to obtain $t=t(x),t=t(y)$，let $\dot{x}_0+\dot{y}_0=0$, then simplify the particular solution above to
\begin{eqnarray}
 x(t) =-\frac{\dot{y}_0-\dot{x}_0-4x_0}{4}+\frac{{e}^{-2t}( \dot{y}_0-\dot{x}_0) }{4}\nonumber\\
y(t) =\frac{\dot{y}_0-\dot{x}_0+4y_0}{4}-\frac{{e}^{-2t}( \dot{y}_0-\dot{x}_0) }{4},
\label{eq:s_sol_2d}
\end{eqnarray}

From the equation above, we derived the velocity:
\begin{eqnarray}
\dot{x} &=&-\frac{{e}^{-2t}( \dot{y}_0-\dot{x}_0) }{2} ,\label{eq:sol_2d_vx}\\
\dot{y} &=&\frac{{e}^{-2t}( \dot{y}_0-\dot{x}_0) }{2}
\label{eq:sol_2d_vy}
\end{eqnarray}
Let the phase curve be denoted as $\gamma$.
From Eq.(\ref{eq:s_sol_2d}), we obtain the inverse functions
\begin{eqnarray}
t&=&-\frac{1}{2}\ln\left[\frac{4}{\dot{y}_0-\dot{x}_0}(x-x_0)+1 \right] \label{eq:tfunc_x}\\
t&=&-\frac{1}{2}\ln\left[-\frac{4}{\dot{y}_0-\dot{x}_0}(y-y_0)+1 \right]
\label{eq:tfunc_y}
\end{eqnarray}
Substituting Eq.(\ref{eq:tfunc_x})(\ref{eq:tfunc_y}) into Eq.(\ref{eq:sol_2d_vx}), we have the map at $\gamma$ from $x,y$ to $\dot{x}$:
\begin{eqnarray}
 \dot{x}(x)&=&-2x-\frac{\dot{y}_0-\dot{x}_0}{2}+2x_0\label{eq:dx1}\\
 \dot{x}(y)&=&2y-\frac{\dot{y}_0-\dot{x}_0}{2}-2y_0\label{eq:dx2}
\end{eqnarray}
Substituting Eq.(\ref{eq:tfunc_x})(\ref{eq:tfunc_y}) into Eq.(\ref{eq:sol_2d_vy}), we have the map at $\gamma$ from $x,y$ to $\dot{y}$:
\begin{eqnarray}
 \dot{y}(y)&=&-2y+\frac{\dot{y}_0-\dot{x}_0}{2}+2y_0\label{eq:dy1}\\
\dot{y}(x)&=&2x+\frac{\dot{y}_0-\dot{x}_0}{2}-2x_0 \label{eq:dy2}
\end{eqnarray}
The components of nonconservative $\bm{F}$ in the system (\ref{eq:simple_2d}) are
\begin{eqnarray}
 F_1&=&\dot{x}-\dot{y} \label{eq:F1}\\
 F_2&=&-\dot{x}+\dot{y}\label{eq:F2}
\end{eqnarray}
Substituting Eq.(\ref{eq:dx1})(\ref{eq:dy2}) into $F_1$(\ref{eq:F1}), then take the quantity as the first component the conservative force
$\mathcal{F}$:
\begin{equation}
 \mathcal{F}_1(x)=-4x-(\dot{y}_0-\dot{x}_0)+4x_0.
\label{eq:mF1}
\end{equation}
Substituting Eq.(\ref{eq:dx2})(\ref{eq:dy1}) into $F_2$(\ref{eq:F2}), then take the quantity as the second component the conservative force
$\mathcal{F}$:
\begin{equation}
 \mathcal{F}_2(y)=-4y+(\dot{y}_0-\dot{x}_0)+4y_0
\label{eq:mF2}
\end{equation}
Since $\partial \mathcal{F}_1/\partial y=\partial \mathcal{F}_2/\partial x=0$, $\mathcal{F}$ must be conservative. 
Consequently we obtain a new conservative system:
\begin{eqnarray}
\ddot{x}&=&-\mathcal{F}_1 \nonumber\\
                  &=&4x+(\dot{y}_0-\dot{x}_0)-4x_0\nonumber\\
\ddot{y}&=&-\mathcal{F}_2 \nonumber\\
&=&4y-(\dot{y}_0-\dot{x}_0)-4y_0.
\label{eq:2d_eq_consys}
\end{eqnarray}
We can readily prove that the particular solution (\ref{eq:s_sol_2d}) can satisfy Eq.(\ref{eq:2d_eq_consys}) too.
In this case, this point agrees with Proposition \ref{pro:1} too.

\subsection{A Formell Example in Vibration Mechanics\label{example}}

Take an $n$-dimensional oscillator with damping as an example, the governing equation of which is as below:
\begin{equation}
\ddot{\bm{q}}+\tensor{C}\dot{\bm{q}}+\tensor{K}\bm{q}=0,
\label{eq:ex2-1}
\end{equation}
where $\bm{q}=\left[q_1,\dots,q_n \right] ^T$, superscript $T$ denotes a matrix transpose,
\[
 \tensor{C}=
\left[
\begin{array}{ccc}
C_{11}&\dots&C_{1n}\\
\vdots&\ddots&\vdots\\
C_{n1}&\dots&C_{nn}
\end{array}
\right],
\tensor{K}=
\left[
\begin{array}{ccc}
K_{11}&\dots&K_{12}\\
\vdots&\ddots&\vdots\\
K_{21}&\dots&K_{22}
\end{array}
\right]
\], 
and $C_{ij}$ and $K_{ij}$ are constants.

It is complicated to solve Eq.(\ref{eq:ex2-1}). If Eq.(\ref{eq:ex2-1}) is higher dimensional, it is almost impossible to solve Eq.(\ref{eq:ex2-1}) analytically. 
Therefore we assume that a solution exists already.
\begin{equation}
 \bm{q}=\bm{q}(t)=\left[q_1(t),\dots,q_n(t)\right].
\label{eq:ex2-2}
\end{equation}
Suppose a group of inverse functions
\begin{equation}
 t=t(q_1),\dots , t=t(q_n).
\label{eq:ex2-3}
\end{equation}
As in Eq.(\ref{eq:asumption}) we can consider that the damping forces are equal to some conservative force under an initial condition
\begin{equation}
\begin{array}{ccc}
c_{11}\dot{q}_1=\varrho_{11}(q_1)&\dots&c_{1n}\dot{q}_n=\varrho_{1n}(q_1)\\
\vdots&\ddots&\vdots\\
c_{n1}\dot{q}_1=\varrho_{21}(q_n)&\dots&c_{nn}\dot{q}_n=\varrho_{nn}(q_n),
\end{array}
\label{eq:ex2-4}
\end{equation}
where $\varrho_{ij}(q_i)$ is a function of $q_i$. 
For convenience, these conservative forces can be defined as functions which are analogous to elastic restoring forces: 
\begin{equation}
\begin{array}{ccc}
\varrho_{11}(q_1)=\kappa_{11}(q_1)q_1&\dots&\varrho_{1n}(q_1)=\kappa_{1n}(q_1)q_1\\
\vdots&\ddots&\vdots\\
\varrho_{n1}(q_1)=\kappa_{n1}(q_n)q_n&\dots &\varrho_{nn}(q_n)=\kappa_{nn}(q_n)q_n,
\end{array}
\label{eq:ex2-5}
\end{equation}
where $\kappa_{ij}(q_i)$ is a function of $q_i$.
An equivalent stiffness matrix $\tensor{\tilde{K}}$ is obtained, which is a diagonal matrix
\begin{equation}
\tensor{\tilde{K}}_{ii}=\sum_{l=1}^n \kappa_{il}(q_l).
\label{eq:ex2-5a}
\end{equation}
Consequently an $n$-dimensional conservative system is obtained
\begin{equation}
 \bm{\ddot{q}}+(\tensor{K}+\tensor{\tilde{K}})\bm{q}=0
\label{eq:ex2-6}
\end{equation}
which shares a common phase curve with the $n$-dimensional damping system(\ref{eq:ex2-1}).
The Hamiltonian of Eqs.(\ref{eq:ex2-6}) is
\begin{equation}
 \hat{H}=\frac{1}{2}\bm{p}^T\bm{p}+\frac{1}{2}\bm{q}^T \tensor{K}\bm{q}+
\int_{\bm{0}}^{\bm{q}} (\tilde{\tensor{K}}\bm{q})^T \dif \bm{q},
\label{eq:ex2-7}
\end{equation}
where $\bm{0}$ is a zero vector, $\bm{p}=\dot{\bm{q}}$. $\hat{H}$ in Eq.(\ref{eq:ex2-7}) is the mechanical energy of the conservative 
system(\ref{eq:ex2-6}), because $\int_{\bm{0}}^{\bm{q}} (\tilde{\tensor{K}}\bm{q})^T \dif \bm{q}$
 is a potential function such that $\hat{H}$ doest not 
depend on any path.

\subsection{Discussion}

Based on the above, we can outline the relationship between a dissipative mechanical system and a group of conservative systems by means of Fig. \ref{fig1}. 
The relationship can be stated from two perspectives:
\begin{figure}
\begin{center}
\includegraphics[totalheight=2.5in]{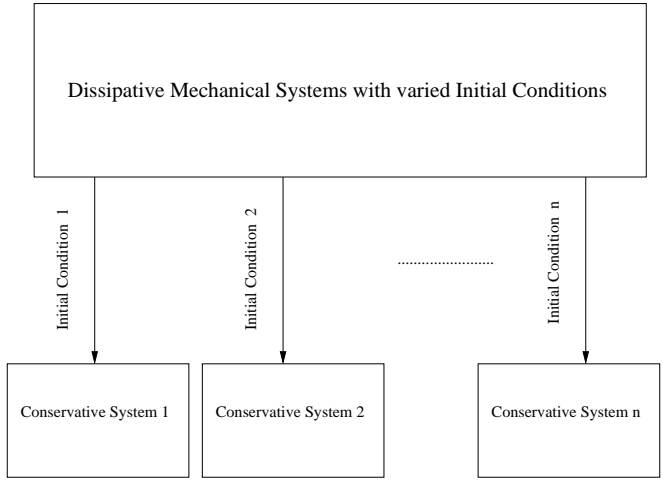}
\caption{A Dissipative Mechanical System and Conservative Systems}
\label{fig1}
\end{center}
\end{figure}

If one explains the relationship from a geometrical perspective, one can obtain Proposition \ref{pro:1}. In this paper 
the conservative systems (\ref{eq:inth-3}) and (\ref{eq:ex2-6}) are called the substituting systems. Although a 
substituting system shares a common phase curve with the original system, under other initial conditions the 
substituting system exhibits different phase curves. Therefore the phase flow of the substituting system differs 
from that of the original system, it follows that the substituting systems is not equal to 
the original system. According to Louisville's theorem (\ref{Liouville}), the phase flow of the original 
dissipative system Eq.(\ref{eq:inth-2}) certainly does not preserve its phase volume, but the phase flow of the 
substituting conservative Eq.(\ref{eq:inth-3}) does.

One also could explain the relationship from a mechanical perspective. 
It is known that there are non-conservative forces in a nonconservative system. The total energy of the nonconservative system consists of 
the work done by nonconservative forces. Hence the function form of the total energy depends on a phase curve i.e. under an initial condition. 
If one constrains the total energy function to a phase curve $\gamma$, the total energy function can be converted into a function of $q,p$. 
One take $\hat{H}$ consisting of this new function and a constant as a Hamiltonian quantity, such that a Hamilton's system (i.e., a 
conservative system) is obtained. Under the initial condition mentioned above, the solution curve of the 
conservative system is the same as that of the original nonconservative system; under other initial conditions 
the solution curve of the conservative is different from that of the original nonconservative system. 
Since one defines the forces(\ref{eq:asumption},\ref{eq:ex2-4},\ref{eq:ex2-5},\ref{eq:ex2-5a}) in the new system,
the Hamiltonian quantity of the conservative can be thought of as the mechanical energy of the new conservative system as Eq.(\ref{eq:ex2-7}).

One might doubt that the orbit of a dissipative dynamical system must be asymptotic, can the asymptotic orbit coincide with one of a conservative mechanical system. 
In some literature\cite{Sunyishui_book_e_2008}, a conservative system defined a system with the behavior of the preservation of phase volume. 
Hasselblatt\cite{Hasselblatt_Katok2003} had explained the question: 'A key to understanding this difference is given by a property that is not directly
observed by looking at individual orbits but by considering the evolution of large sets of initial conditions simultaneously, the preservation of phase volume.'
This point agrees with the second part of the proof of the Proposition \ref{pro:1}.

The Hamiltonians of the new conservative systems in general are not analytically integrable, unless
 the original mechanical system is integrable. The reason is that the work done by damping force
depends on the phase curve. If the system is integrable, then the
phase curve can be explicitly written out, the system has an analytical
solution, and therefore the work done by damping force can be
explicitly integrated. Subsequently, the Hamiltonian $\hat{H}$ can be explicitly expressed.
Most systems do not have an analytical solution. Despite this, the Hamilton quantity, coordinates and momentum must satisfy
 Eq.(\ref{eq:inth-3}) under a certain initial condition.  Why had Klein\cite{Klein1928} written, ''Physicists can make use of these theories only very little,
an engineers nothing at all''?  The answer: when one is seeking an analytical solution to a classical mechanics problem by 
utilizing Hamiltonian formalism, in fact one must inevitably convert the problem back to Newtonian formalism. This means that an explicit form of  Hamiltonian quantity is not necessary for classical mechanics.
What is important is the relationship between $q,p$ and the Hamiltonian quantity embodied in the Hamilton's Equation.

\section{conclusions}
We can conclude that a dissipative mechanical system has such
properties: for any nonconservative classical mechanical system and any initial condition, 
there exists a conservative one, the two systems share
one and only one common phase curve;  the Hamiltonian of the conservative 
system is the sum of the total energy of the nonconservative system on the aforementioned phase curve
and a constant depending on the initial condition.  We can further conclude,
that a dissipative problem can be reformulated as an infinite number
of non-dissipative problems, one corresponding to each phase curve
of the dissipative problem. One can avoid having to change the
definition of the canonical momentum in the Hamilton formalism,
because under a certain initial condition the motion of one of the
group of conservative systems is the same as the original dissipative
system.

\bibliography{mybib.bib}

\end{document}